\documentclass[11pt]{article}

\usepackage[margin=2.5cm]{geometry}
\usepackage{authblk}

\usepackage{amssymb,amsmath,amsthm,amsfonts,thmtools}
\usepackage{color}
\usepackage{algorithm}
\usepackage{yfonts}
\usepackage{mathrsfs}

\usepackage{graphicx}
\usepackage{xspace}

\usepackage{times}
\usepackage{verbatim}
\usepackage{enumitem}
\usepackage{graphicx,wrapfig,lipsum}
\usepackage[font=small]{caption}
\usepackage{subcaption}
\usepackage[english]{babel}
\usepackage[utf8]{inputenc}

\usepackage[noend]{algpseudocode}
\usepackage{url}
\usepackage[all]{xy}
\usepackage{rotating}
\usepackage{ifpdf}
\usepackage{tcolorbox}
\usepackage{lipsum}

\usepackage{mdframed}             
\usepackage{xifthen} 
\usepackage{mathtools}  

\DeclareMathAlphabet\mathbfcal{OMS}{cmsy}{b}{n}

\usepackage{comment}
\usepackage[T1]{fontenc}

\usepackage{lineno}
\linenumbers

\usepackage[hidelinks]{hyperref} 

\usepackage[section]{placeins}
\usepackage{float}
\usepackage[title]{appendix}
\usepackage{graphicx}
\usepackage{tikz}
\usetikzlibrary{automata,positioning}
\usepackage{dsfont}
\usepackage{xspace}
\usepackage{enumitem}
\setdescription{font=\normalfont}


\newcommand{\barr}{{\bar r}}

\newcommand{\barS}{{\bar S}}

\newcommand{\tildeO}{{\tilde{O}}}



\colorlet{darkgreen}{green!45!black}



\newcommand{\Disperse}{{\textsc{Disperse}}}
\newcommand{\QuasiGossip}{{\textsc{QuasiGossip}}}



\newtheorem{theorem}{Theorem}




\newcommand{\ignore}[1]{}

\newcommand{\margincomment}[2]%
{\marginpar{\footnotesize\raggedright {\color{red}#1}: #2}}


\newcommand{\etal}{et~al.\ }


\newcommand{\myparagraph}[1]{{\medskip\noindent\textbf{#1}}}


\newcommand{\braced}[1]{{ \left\{ {#1} \right\} }}



\newcommand{\half}{\textstyle{\frac{1}{2}}}
\newcommand{\loglog}{\log\!\log}


\newcommand{\kNpair}[2]{$(#1,#2)$}
\newcommand{\kqNtriple}[3]{$(#1, #2, #3)$}


\newcommand{\ListLengths}{\setlength{\itemsep}{0ex}\setlength{\topsep}{1ex}\setlength{\partopsep}{0ex}}




\graphicspath{{./Figures/}}
\pdfminorversion=7


\title{On Permutation Selectors and their Applications in Ad-Hoc Radio Networks Protocols\thanks{Research supported by NSF grant CCF-2153723.}}

\author[1]{Jordan Kuschner}
\author[1]{Yugarshi Shashwat}
\author[1]{Sarthak Yadav}
\author[1]{Marek Chrobak}

\affil[1]{University of California at Riverside}


\begin{document}
\nolinenumbers

\maketitle

\begin{abstract}
Selective families of sets, or selectors, are combinatorial tools used to ``isolate'' individual members of sets 
from some set family. Given a set $X$ and an element $x\in X$,
to isolate $x$ from $X$, at least one of the sets in the selector must intersect $X$ on exactly $x$.  
We study \emph{\kNpair{k}{N}-permutation selectors} which have the property that they can isolate each element of each $k$-element
subset of $\braced{0,1,...,N-1}$ in each possible order. These selectors can be used in protocols for
ad-hoc radio networks to more efficiently disseminate information along multiple hops.
In 2004, Gasieniec, Radzik and Xin gave a construction of a \kNpair{k}{N}-permutation selector of size $O(k^2\log^3 N)$.
This paper improves this by providing a probabilistic construction of a \kNpair{k}{N}-permutation selector of size $O(k^2\log N)$.
Remarkably, this matches the asymptotic bound for standard strong \kNpair{k}{N}-selectors, that isolate each element of each set of size $k$, 
but with no restriction on the order.
We then show that the use of our \kNpair{k}{N}-permutation selector improves the best running time for
gossiping in ad-hoc radio networks by a poly-logarithmic factor. 
\end{abstract}


\section{Introduction}
\label{sec: introduction}


Selective families of sets, or selectors, are combinatorial tools used to ``isolate'' individual members of sets 
belonging to a given collection of sets. Given a set $X$ and some element $x\in X$,
to isolate $x$ from $X$, at least one of the sets in the selector must intersect $X$ on exactly $x$. 
Various types of selectors have been constructed in the literature and used in applications ranging
from group testing to coding, bioinformatics, multiple-access channel protocols, and information dissemination in radio networks. 

To illustrate this concept on a concrete example, consider the contention resolution problem in multiple-access channels (MACs)
without feedback: $N$ devices are connected to a shared broadcast channel (say, a radio frequency or ethernet), and
each has a unique identifier from the set $U = \braced{0,1,...,N-1}$. 
If exactly one device transmits in some time slot, then its message will be delivered through the channel to all
other devices. However, if two or more devices transmit simultaneously, then a collision on the broadcast channel occurs.
Channel collisions are indistinguishable from background noise, and the sender does not receive
any feedback about the fate of its transmission.

Suppose that some unknown set of $k$ devices initially activate and have messages to be transmitted.
(See Figure~\ref{fig: mac and in-neighbors}.)
We seek a protocol that would allow these $k$ active devices to transmit successfully, providing that the other devices remain idle.
Without any feedback, each protocol for this model is non-adaptive, that is, it
can be uniquely identified with the sequence of transmission sets $\barS = S_0, S_1, ..., S_{m-1}$, where each
$S_t$ is the set of devices allowed to transmit at time slots $t$, if active.
A trivial protocol would avoid collisions altogether by using singleton transmission sets $\braced{0}, \braced{1}, ..., \braced{N-1}$,
but this requires $m=N$ steps to complete $k$ transmissions, which seems wasteful if $k$ is small.

For an active device $x$ to transmit successfully at a slot $t$ of a transmission sequence $\barS$, $x$ must be the only
active device in set $S_t$. Therefore $\barS$ must satisfy the following property:
for any set $X\subseteq U$ of cardinality $k$ and any $x\in X$
there is $t\in\braced{0,1,...,m-1}$ such that $S_t\cap X = \braced{x}$. 
A family $\barS$ that satisfies this condition is called a \emph{strong \kNpair{k}{N}-selector}.
In other words, a strong \kNpair{k}{N}-selector isolates each element of each subset of $U$ of cardinality $k$.
Optimizing the time to complete all transmissions in the above MAC model is equivalent to
constructing a strong \kNpair{k}{N}-selector of minimum size $m$.

Very similar challenges arise in information dissemination protocols for ad-hoc radio networks.
A radio network is a directed graph whose nodes represent radio transmitters/receivers and edges
represent their transmission ranges. There are $n$ nodes, each assigned a unique label from  $U = \braced{0,1,...,N-1}$. 
When a node transmits, its message is sent to all its out-neighbors; 
however, the possibility of collisions at the out-neighbors can result in the loss of transmitted messages.
The ad-hoc property dictates that the nodes are oblivious to the network's topology at the beginning. 
In particular, in the scenario with a node $v$ having $k$ in-neighbors that attempt to transmit to $v$
the challenge is equivalent to the MAC contention resolution. 
( See Figure~\ref{fig: mac and in-neighbors}.)
A protocol that applies a strong \kNpair{k}{N}-selector will guarantee that in at most $m$ steps
all in-neighbors of $v$ will successfully transmit their messages to it.

\begin{figure}[t]
		\centering
		\begin{subfigure}[c]{0.48\textwidth}
			\centering
   		 \includegraphics[width=2.25in]{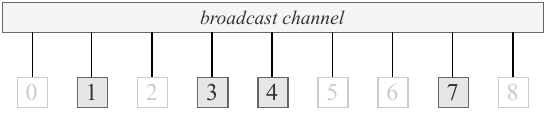}
		\end{subfigure}
	\begin{subfigure}[c]{0.48\textwidth}
		\centering
		    \includegraphics[width=2in]{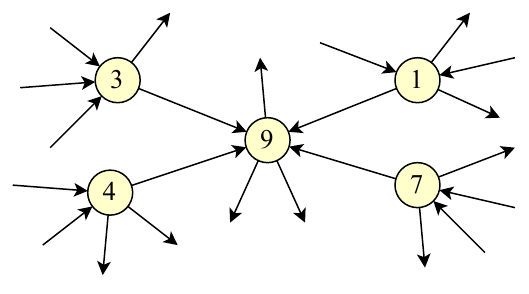}
	\end{subfigure}
		\caption{On the left, an illustration of MAC contention resolution, with
				$N=9$ and $k=4$. Active devices are marked with darker colors.
				On the right, a node $v$ in a radio network, with label $9$ and four in-neighbors.
				} \label{fig: mac and in-neighbors}
\end{figure}


Strong \kNpair{k}{N}-selectors have been well studied in the literature (see the discussion at the end of this section). 
It is known that
there are strong \kNpair{k}{N}-selectors of size $m = O(k^2\log N)$, beating the earlier-mentioned trivial bound of $N$ if $k$ is small.


\paragraph{Our contribution.}
We study a new type of selectors called \emph{\kNpair{k}{N}-permutation selectors}.
We earlier saw that a strong \kNpair{k}{N}-selector isolates all elements of each subset  $X$ of $U$ of size $k$.
The \kNpair{k}{N}-permutation selector guarantees an extra property that for each possible permutation of $X$, 
it isolates all $x \in X$ in the order of this permutation. 

The motivation comes from radio networks, where, unlike in the MAC contention resolution problem, a
message may need to travel along a path of multiple nodes. At each node $v$ along the path, the in-neighbors of $v$ need to overcome contention to successfully transmit to $v$. 
Consider such a path $P = v_0v_1...v_s$, and suppose that the in-neighborhood $X$ of $P$ (the set
of nodes with edges going to $P$) has at most $k$ nodes. (See Figure~\ref{fig: path neighborhood} for illustration.)
An $s$-fold repetition of a strong \kNpair{k}{N}-selector will deliver a message from $v_0$ to $v_s$ in time 
$O(sk^2\log N)  = O(k^3\log N)$.  A \kNpair{k}{N}-permutation selector of size $m$ can achieve this in time $m$, because
it guarantees that in $m$ steps the nodes $v_0,v_1,...,v_{s-1}$ will successfully transmit, one by one, \emph{in this order}.


\begin{figure}[t]
	\centering
    \includegraphics[width=3in]{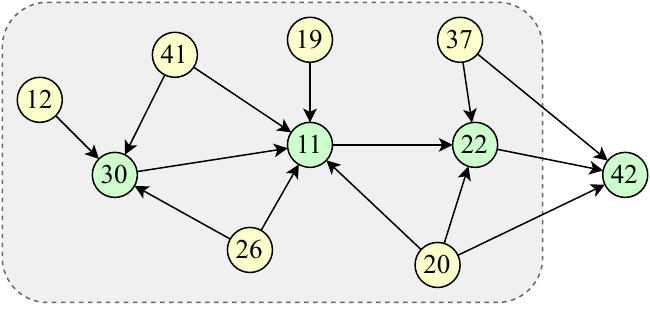}
	\caption{An example of a path and its in-neighborhood, with nodes identified by their labels.
		The path is $P = 30,11,22,42$ and its neighborhood (shaded) has $9$ nodes.
	}
	\label{fig: path neighborhood}
\end{figure}


The concept of \kNpair{k}{N}-permutation selectors is implicit in the work by Gasieniec, Radzik and Xin~\cite{Gasieniec_etal_det_gossip_04}.
They show that a \kNpair{k}{N}-permutation selector can be constructed by interleaving other known types of selectors%
\footnote{We remark that in~\cite{Gasieniec_etal_det_gossip_04} the authors use a different style for specifying the
parameters of selectors. The notation in our paper follows the convention from~\cite{DeBonis_etal_selectors}.}.
Their construction is called a \emph{path selector} and it has size 
$O(k^2\log N\log^2 k)$. Thus, assuming that $N$ is polynomial in the network size $n$, a protocol 
based on their path selector will deliver a message along a path with in-neighborhood of size at most $k$ in time 
$O(k^2 \log^3 n)$. They used this idea to give an $O(n^{4/3} \log^{10/3} n)$-time protocol for gossiping
in ad-hoc radio networks. This is the best currently known upper bound for this problem.

The main result of our paper is an improved upper bound on the size of  \kNpair{k}{N}-permutation selectors. Using a
probabilistic construction, we prove the following theorem in Section~\ref{sec: permutation selectors}.


\begin{theorem}\label{thm: main result}
There is a \kNpair{k}{N}-permutation selector of size $m = O(k^2\log N)$.
\end{theorem}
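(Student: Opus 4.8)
The plan is to use the probabilistic method: fix a natural random family of transmission sets, compute the probability that it fails to be a $(k,N)$-permutation selector, and show that for $m = O(k^2 \log N)$ this probability is strictly less than $1$, so a good selector exists. The random construction I would use is the standard one: each set $S_t$ (for $t \in \{0,1,\dots,m-1\}$) includes each element of $U$ independently with probability $1/k$. The key insight is to figure out the right notion of a ``failure event'' for the permutation property and then apply a union bound over a carefully-counted collection of such events.

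\medskip

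The right way to define success is to think of what must happen \emph{along a consecutive window of time steps}. Fix a target subset $X \subseteq U$ with $|X| = k$ and a permutation (ordering) $x_1, x_2, \dots, x_k$ of its elements. I want to find times $t_1 < t_2 < \dots$ at which the elements are isolated \emph{in this order}. Concretely, I would say the selector succeeds for $(X, \text{order})$ if there is an increasing sequence of slots in which $x_1$ is isolated from $X$, then later $x_2$ is isolated from $X \setminus \{x_1\}$, then $x_3$ from $X \setminus \{x_1,x_2\}$, and so on. The crucial probabilistic fact is that in a single random slot, a fixed element $x$ is isolated from a fixed set $Y$ containing it with probability $\frac{1}{k}\bigl(1 - \frac1k\bigr)^{|Y|-1} \ge \frac{1}{k}\bigl(1-\frac1k\bigr)^{k-1} \ge \frac{1}{ek}$, a constant over $k$ times $1/k$. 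So a single ``isolation step'' succeeds with probability $\Omega(1/k)$ per slot, and over $\Theta(k \log N)$ slots the probability of \emph{never} achieving it is at most $(1 - \Omega(1/k))^{\Theta(k\log N)} = N^{-\Theta(1)}$, which I can push below any desired polynomial in $N$ by tuning the hidden constant.

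\medskip

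The main obstacle is that the $k$ successive isolations must occur \emph{in order} and they are not independent in an obvious way, so the naive ``multiply $k$ small probabilities'' argument needs care, and the union bound must range over all $\binom{N}{k} k!$ ordered $k$-subsets, which contributes $\log\bigl(\binom{N}{k}k!\bigr) = O(k \log N)$ to the exponent we must beat. To handle the ordering I would partition the $m$ slots into $k$ equal consecutive blocks, each of length $m/k = \Theta(k \log N)$, and demand only that $x_j$ be isolated from $X \setminus \{x_1,\dots,x_{j-1}\}$ somewhere in block $j$. Because the blocks are disjoint in time, the $k$ block-events are \emph{mutually independent}, so the probability that block $j$ fails to isolate its target is at most $(1 - \frac{1}{ek})^{m/k}$, and the probability that the ordered pair $(X,\text{order})$ is not served is at most $k \cdot (1-\frac{1}{ek})^{m/k}$ by a union bound over blocks. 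This reduces the analysis to independent blocks and sidesteps the dependency issue entirely.

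\medskip

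Putting the pieces together, the overall failure probability is bounded by
\[
\binom{N}{k}\, k!\; \cdot\; k \cdot \Bigl(1 - \tfrac{1}{ek}\Bigr)^{m/k}
\;\le\; N^{k}\cdot k \cdot \exp\!\Bigl(-\tfrac{m}{ek^2}\Bigr),
\]
and taking logarithms, this is less than $1$ provided $\frac{m}{ek^{2}} > k \ln N + \ln k$, i.e. $m = O(k^2 \log N)$ suffices. Hence a $(k,N)$-permutation selector of this size exists, which is exactly Theorem~\ref{thm: main result}. I expect the only delicate points in the full write-up to be (i) verifying the per-slot isolation bound $\frac{1}{ek}$ carefully, including the edge case $|Y|=1$, and (ii) confirming that serving every ordered $k$-subset in this block-structured sense genuinely yields the permutation-selector property as the application to paths requires (in particular that the last element need not be isolated, or that including it costs nothing asymptotically).
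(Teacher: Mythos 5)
Your random construction and the outer union bound over $\binom{N}{k}\,k!$ ordered sets match the paper, but the block argument at the core of your proof does not deliver the claimed bound, and your final arithmetic conceals this. You correctly derive that the failure probability is at most $N^k\cdot k\cdot\exp\bigl(-\tfrac{m}{ek^2}\bigr)$ and that you need $\tfrac{m}{ek^2} > k\ln N + \ln k$; but that inequality forces $m > ek^3\ln N$, i.e.\ $m = \Omega(k^3\log N)$, not $O(k^2\log N)$ as you conclude. The loss is structural, not something you can tune away with constants: once you dedicate a fixed window of $m/k$ slots to each $x_j$, the union bound over the roughly $N^k$ ordered sets demands that \emph{each individual block} succeed with probability $1 - N^{-\Theta(k)}$, and since a block succeeds per slot with probability only $\Theta(1/k)$, each block must have length $\Omega(k^2\log N)$, hence $m = \Omega(k^3\log N)$ in total. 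That is exactly the bound one gets by repeating a strong $(k,N)$-selector $k$ times --- the bound the theorem is designed to beat. (A secondary, fixable issue: you only require $x_j$ to be isolated from $X\setminus\{x_1,\dots,x_{j-1}\}$, which is weaker than property (PS), which demands isolation from all of $X$; the weaker notion is also insufficient for the path application, since already-served in-neighbors keep transmitting on later slots. Fortunately isolation from the full $X$ has the same per-slot probability $\frac{1}{k}(1-\frac1k)^{k-1}\ge \frac{1}{ek}$, so repairing this costs nothing.)

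The missing idea is to let the $k$ in-order isolations share the entire time horizon rather than confining each to a pre-assigned block. The paper does this by noting that each slot isolates \emph{some} element of $X$ with constant probability $\gamma = (1-1/k)^{k-1} > 1/e$; a Chernoff bound shows at least $m/4$ slots are isolating except with probability $\alpha^m$, and conditioned on this the sequence of isolated elements is a uniform random string over $X$, so it remains to bound the probability that a random string of length $\ell = m/4$ over a $k$-letter alphabet misses $\pi$ as a subsequence. A counting argument over lexicographically-first embeddings gives $p_{\ell,k}\le e^{-\ell/k}(2\ell/k)^k$, yielding a per-permutation failure bound $\beta^{m/k}(m/k)^k$ whose exponential part decays like $e^{-\Theta(m/k)}$ rather than your $e^{-\Theta(m/k^2)}$ --- precisely the factor-$k$ saving needed to close the union bound at $m = O(k^2\log N)$. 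You could also repair your own argument with minimal surgery by going sequential instead of using blocks: call a slot a success if it isolates the \emph{current} next element of $\pi$ from all of $X$. Conditioned on any history, a slot succeeds with probability exactly $\gamma/k$, so the number of in-order isolations is distributed as $\mathrm{Bin}(m,\gamma/k)$, and failure means fewer than $k$ successes, which for $m = ck^2\log N$ has probability $e^{-\Omega(ck\log N)}$ by the Chernoff lower tail --- again an exponent of order $m/k$, not $m/k^2$, so the union bound closes for a large constant $c$.
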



This bound matches the best bound on strong \kNpair{k}{N}-selectors.
This seems surprising, as \kNpair{k}{N}-permutation selectors offer more capability: they can isolate any
$k$ elements in any given order, while strong selectors will only do so in some unknown order.
Theorem~\ref{thm: main result} leads to a poly-logarithmic improvement of the time complexity of gossiping in ad-hoc 
radio networks. A further minor improvement can be achieved by using 
a faster procedure for broadcasting~\cite{Czumaj_Davies_faster_16} inside the gossiping protocol.
This leads to the following result:


\begin{theorem}\label{thm: gossiping result}
If $N$ is polynomial in $n$, then the gossiping problem in ad-hoc radio networks can be solved in
time $O(n^{4/3}\log n (\loglog n)^{2/3})$.
\end{theorem}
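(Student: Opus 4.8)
The plan is to treat the gossiping framework of Gasieniec, Radzik and Xin~\cite{Gasieniec_etal_det_gossip_04} essentially as a black box and re-derive its running time after plugging in two faster subroutines. Recall that their protocol disseminates information by pushing combined messages along paths: to move a message along a path $P=v_0v_1\ldots v_s$ whose in-neighborhood has size at most $k$, it invokes a path selector, and (as described around Figure~\ref{fig: path neighborhood}) a single invocation of a \kNpair{k}{N}-permutation selector of size $m$ accomplishes this in $m$ steps rather than the $O(s\,k^2\log N)$ steps needed by an $s$-fold repetition of a strong selector. Consequently their gossiping time has the form $O\parend{R(k)\cdot m(k) + C(k)}$, where $m(k)$ is the size of the permutation selector used, $R(k)$ is the number of selector invocations dictated by the clustering/scheduling part of the algorithm, and $C(k)$ is the cost of the broadcasting subroutine that distributes the gathered information. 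With their path selector $m(k)=O(k^2\log N\log^2 k)$ and their broadcasting routine, optimizing over $k$ gives their $O(n^{4/3}\log^{10/3}n)$ bound.

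First I would substitute our permutation selector from Theorem~\ref{thm: main result}, of size $m(k)=O(k^2\log N)$, for their path selector. Since $N$ is polynomial in $n$ we have $\log N=\Theta(\log n)$ and $\log k=O(\log n)$, so this replacement strips a factor $\Theta(\log^2 k)=O(\log^2 n)$ from the cost of \emph{every} selector invocation while leaving the rest of the framework — and in particular its correctness — unchanged: the in-order isolation guarantee of a \kNpair{k}{N}-permutation selector is exactly the property the path-propagation step relies on, so no structural modification of the protocol is required. Re-running their analysis with the smaller $m(k)$ but the same $R(k)$ and $C(k)$ already collapses the logarithmic exponent substantially, leaving the gathering term in the natural form $R(k)\cdot m(k)=O\parend{(n/k)\cdot k^2\log n}=O(nk\log n)$.

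Next I would replace their broadcasting subroutine by the faster deterministic broadcast of Czumaj and Davies~\cite{Czumaj_Davies_faster_16}, whose running time carries only a $\loglog n$-type overhead instead of further powers of $\log n$; this reshapes $C(k)$ so that the broadcasting contribution no longer dominates the logarithmic factors. The running time then becomes a sum of the gathering term $O(nk\log n)$, which increases in $k$, and a broadcasting term that decreases in $k$ and carries the $\loglog n$ dependence. Balancing the two yields an optimal threshold $k=\Theta\parend{n^{1/3}(\loglog n)^{2/3}}$, at which the gathering term evaluates to $O(nk\log n)=O\parend{n^{4/3}\log n\,(\loglog n)^{2/3}}$, and the broadcasting term matches it; this is exactly the claimed bound.

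The routine part is the two black-box substitutions; the delicate part, and the main obstacle, is the bookkeeping of the poly-logarithmic factors in this final optimization. One power of $\log n$ is unavoidable and comes from the single $\log N$ in the selector size, but the fractional exponent $2/3$ on $\loglog n$ emerges only from the precise way the increasing gathering term and the $\loglog$-bearing broadcasting term balance at the optimal $k$; pinning it down requires writing both terms with their $\loglog n$ dependence explicit and differentiating the sum with respect to $k$, rather than merely equating the two terms up to polylogarithmic slack. A secondary point to verify is that the Czumaj--Davies broadcast can legitimately be invoked in the ad-hoc setting with labels drawn from $\braced{0,1,\ldots,N-1}$ and $N=\mathrm{poly}(n)$ under the same model assumptions used elsewhere in the protocol, so that its guarantees compose cleanly with the permutation-selector-based gathering phase.
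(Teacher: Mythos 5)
Your high-level plan---swap the $O(k^2\log N)$ permutation selector of Theorem~\ref{thm: main result} and the faster broadcast of~\cite{Czumaj_Davies_faster_16} into the gossiping framework of~\cite{Gasieniec_etal_det_gossip_04}---is exactly the paper's approach, but your cost decomposition of that framework is wrong, and the balancing step built on it fails. You write the running time as $R(k)\cdot m(k)+C(k)$ with $R(k)=O(n/k)$ \emph{selector} invocations, yielding a gathering term $O((n/k)\cdot k^2\log n)=O(nk\log n)$. In the actual protocol (\QuasiGossip) the permutation selector is not invoked $n/\kappa$ times: it is invoked only $O(\log \kappa)$ times, once per doubling phase of the active-path argument, for a total selector cost of $O(\kappa^2\log^2 n)$, and the factor $n/\kappa$ instead counts the broadcasts performed inside $\Disperse$, whose total cost is $O((n/\kappa+\log n)B(n)\log n)$ with $B(n)=O(n\log n\,\loglog n)$. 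So the term growing in $\kappa$ is quadratic, $\kappa^2\,\polylog n$, while the shrinking term is $(n/\kappa)B(n)\log n\approx (n^2/\kappa)\polylog n$; equating them forces $\kappa=\Theta\bigl((nB(n)/\log n)^{1/3}\bigr)=\Theta\bigl(n^{2/3}(\loglog n)^{1/3}\bigr)$, and the $n^{4/3}$ and $(\loglog n)^{2/3}$ come from $n^{2/3}B(n)^{2/3}$. Under your decomposition the numbers simply do not balance: with any broadcasting term of the form $(n/k)B(n)\,\polylog n$, your choice $k=\Theta\bigl(n^{1/3}(\loglog n)^{2/3}\bigr)$ leaves that term at $\Theta(n^{5/3}\polylog n)$, far above your gathering term, and the true optimum of $nk\log n+(n/k)B(n)\log n$ sits at $k=\Theta(\sqrt{B(n)})$ with value $\Theta(n^{3/2}\polylog n)$, which is \emph{worse} than $n^{4/3}\polylog n$. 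The assertion that ``the broadcasting term matches it'' is precisely the step with no derivation behind it---the value of $k$ appears reverse-engineered from the claimed answer---and your remark that one must differentiate rather than equate terms is a red herring: equating the monotone increasing and decreasing terms is exactly how the paper optimizes $\kappa$.

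Two further points. First, your only worry about $N=\mathrm{poly}(n)$ is whether the Czumaj--Davies broadcast composes; but the real large-label issue, which the paper addresses explicitly, is the initial round-robin over labels in lines~1--2 of \QuasiGossip, which would cost $N\gg n$ steps and is instead replaced by \kqNtriple{s}{s/4}{N}-selectors interleaved with $\Disperse(s/4)$ for geometrically decreasing $s$, amortizing to $O((n/\kappa)B(n)\log n)$. Second, watch the logarithm bookkeeping: the total selector cost carries $\log^2 n$ (size $O(\kappa^2\log N)$ times $O(\log\kappa)$ phases), and the paper's own computation in Section~\ref{sec: gossiping in ad-hoc radio networks} of the balanced bound comes out as $O(n^{4/3}\log^2 n(\loglog n)^{2/3})$; any derivation claiming a single factor of $\log n$, as yours does, must say explicitly where the additional $\log$ factors (the binary search in $\Disperse$, the $O(\log\kappa)$ selector repetitions) are saved, and your sketch does not engage with either source.
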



\smallskip

In Appendix~\ref{sec: nkq permutation selectors}, we also consider an even more general
structure named \emph{\kqNtriple{k}{q}{N}-permutation selectors}. These are defined by the following property: 
for each permutation of each $k$-element set $X$, the selector isolates some $q$-elements of $X$ in the order of this permutation.
This naturally extends the concept of \kqNtriple{k}{q}{N}-selectors, which isolate $q$ elements of a
$k$-element set without restricting the order. 
(See the discussion below.)
Extending the proof of Theorem~\ref{thm: main result}, we show that there exists
a \kqNtriple{k}{q}{N}-permutation selector of size $O(kq\log N)$.


\paragraph{Related work.}
Combinatorial structures closely related to selectors have been studied in different settings
and under different terminology, with connections between these concepts not always obvious. 
Some of these structures are equivalent to strong \kNpair{k}{N}-selectors, while others are related to the concept 
of \emph{weak \kNpair{k}{N}-selectors}, which only isolate any one element of each $k$-element set.
Examples include superimposed codes used in information retrieval~\cite{kautz_singleton_1964},
cover-free set families~\cite{Erdos_etal_families_85}, 
as well as protocols for non-adaptive group testing~\cite{hwang_sos_group_testing_1987}
and for MAC contention resolution~\cite{komlos_greenberg_non-adaptive_mac_1985}.
The use of selectors in protocols for ad-hoc radio networks was initiated in~\cite{basagni_etal_deterministic_broadcast_1999,chlebus_etal_deterministic_broadcasting_2002},
and various forms of selectors have been used in essentially all deterministic
protocols for information dissemination in such networks.

One classical example of applications of selectors outside of networking is group testing,
a method used in fields such as medical diagnostics, quality control, or bioinformatics. 
The goal in group testing is to efficiently identify  individuals who test positive for a specific trait, such as a disease. 
Instead of testing these individuals separately, group testing works by pooling individuals 
into groups for collective testing. With this approach, selectors can be employed to minimize the number of required tests.
For more thorough discussion of applications of selectors, see~\cite{DeBonis_etal_selectors} and the references therein.

For strong \kNpair{k}{N}-selectors, the upper bound of $O(k^2\log N)$ can
be established by a probablistic construction~\cite{kautz_singleton_1964,Erdos_etal_families_85,komlos_greenberg_non-adaptive_mac_1985}.
An explicit construction matching this bound can be found in~\cite{porat_rothschild_explicit_group_testing_2011}. 
Note that such bounds are of interest only for $k = O(\sqrt{N/\log N})$, because
for larger values of $k$ a trivial construction using $N$ singletons is better.
This is essentially tight, given a lower bound of $\Omega(k^2\log^{-1}k \log N)$ (for $k = O(\sqrt{N})$) 
established in~\cite{Clementi_etal_distributed_02,dyachkov_rykov_bounds_codes_1982,chaurdhuri_radhakrishnan_deterministic_1996}.

De Bonis~{\etal}~\cite{DeBonis_etal_selectors} introduced a more general model of selectors
called \emph{\kqNtriple{k}{q}{N}-selectors}. A \kqNtriple{k}{q}{N}-selector has the property that it
can isolate at least $q$ elements from each $k$-element subset of $U$. 
A strong \kNpair{k}{N}-selector is a special case when $q=k$, and a weak \kNpair{k}{N}-selector is a special case when $q = 1$.
They proved that there are \kqNtriple{k}{q}{N}-selectors of size $O(k^2/(k - q + 1)\log N)$.


\section{Permutation Selectors}
\label{sec: permutation selectors}



The objective in this section is to prove Theorem~\ref{thm: main result}, namely to show that
there is a  \kNpair{k}{N}-permutation selector of size $m = O(k^2\log N)$.

We start with a more formal definition of our permutation selectors. Let $\barS = S_0, S_1, ... , S_{m-1}$ be a sequence of subsets of $U$.
We call $\barS$ a \emph{\kNpair{k}{N}-permutation selector} if it has the following property:
\begin{description}
\item{(PS)} For every  $X\subseteq U$ with $|X| = k$ and for each permutation $\pi= x_1, x_2, ..., x_k$ of $X$, 
	there exists an increasing sequence of indices $0  \le i_1 < i_2 < ... < i_k\le m-1$ 
	such that $S_{i_l}$ isolates $x_{l}$ from $X$ for each $l= 1,2, . . . , k$.
\end{description}

If $\barS$ satisfies this property for a set $X$ and a permutation $\pi$, 
we will say that it \emph{isolates $X$ in order $\pi$}, or simply that $\barS$ \emph{isolates $\pi$}.
Thus $\barS$ is a \kNpair{k}{N}-permutation selector if it isolates each $k$-permutation of $U$.

\medskip

The proof idea is to choose each set $S_i$ randomly, by having each label in $U$, independently, add itself to $S_i$ with probability $\frac{1}{k}$. 
This way, for any fixed set $X$ of cardinality $k$, each $S_i$ will isolate some element of $X$ with probability
$\gamma = (1-1/k)^{k-1}$, so $\gamma$ is in the range $\frac{1}{e} < \gamma \le \half$. 
We then show that the probability that $X$ is not isolated in order $\pi$ is
exponentially small with respect to $m/k$. This, using the union bound, will give us that the probability 
that $\barS$ is \emph{not} a \kNpair{k}{N}-permutation selector is less than $1$ if $m = \Theta(k^2\log N)$, with a sufficiently large constant hidden behind the big-Theta. 
Therefore, some sequence $\barS$ of length $m = O(k^2\log N)$ is a \kNpair{k}{N}-permutation selector.

We now proceed with the details.
As in the first part of the proof we fix $X$ and $\pi$, we can as well assume that $X = \braced{0,1,...,k-1}$ and
that the desired permutation of $X$ is $\pi = 0,1,...,k-1$. Our first goal is to estimate the probability
that $\barS$ does not isolate $X$ in order $\pi$.

To this end, we start by considering an auxiliary problem, which is basically a variant of coupon collection:
Suppose that we generate uniformly a random sequence $\barr = r_0,r_1,...,r_{\ell-1}$ of elements of $\braced{0,1,...,k-1}$, where $\ell \ge k \ge 2$.
We want to compute the probability $p_{\ell,k}$ of the event ``$\barr$ does not contain $\pi$ as a subsequence''.
To compute $p_{\ell,k}$, we reason as follows. For a given $j\in\braced{0,1,...,k-1}$,
if $\barr$ contains $0,1,..,j-1$ as a sub-sequence,
associate with $\barr$ the unique \emph{lexicographically-first appearance} of $0,1,..,j-1$,
namely the increasing sequence $i_0,i_1,...,i_{j-1}$ of positions in $\barr$,
where $i_0$ is the position of the first $0$, $i_1$ is the first position of $1$ after $i_0$,
and so on. The number of $\barr$'s 
that contain $0,1,2,...,j-1$ but not $0,1,2,...,j$ can be computed by multiplying
the number  of choices, $\binom{\ell}{j}$, for the lexicographically-first appearance of $0,1,...,j-1$,
and the number of ways for filling the remaining $\ell-j$ positions, which is $(k-1)^{\ell-j}$.
This yields the formula for $p_{\ell,k}$, given below, for which we then derive an upper bound estimate.
\begin{align}
p_{\ell,k} \;&=\;  \frac{1}{k^\ell}\cdot \sum_{j=0}^{k-1} \binom{\ell}{j} (k-1)^{\ell-j}
\nonumber \\
		&= \; (1- 1/k)^\ell \cdot \sum_{j=0}^{k-1} \binom{\ell}{j} (k-1)^{-j}
\nonumber \\
		&< \; e^{-\ell/k} \cdot \sum_{j=0}^{k-1} \left(\frac{\ell}{k-1}\right)^{j}
		\; \le \; e^{-\ell/k} ({2\ell}/{k})^k,
		\label{eqn: estimate on pellk}
\end{align}
where the last step follows from $\ell/(k-1) \le 2\ell/k$ 
and $\sum_{j=0}^{k-1} (2\ell/k)^{j} < (2\ell/k)^k/(2\ell/k-1) \le (2\ell/k)^k$, as $\ell\ge k$.

Next, still with $X$ and $\pi$ fixed as above, we estimate the probability that $\barS$ does not isolate $X$ in order $\pi$. 
Let $h$ be the random variable representing the number of sets $S_i$ that isolate some element of $X$ 
(with repetitions counted). As stated earlier, the probability that some element of $X$ is isolated by a given $S_i$ is 
$\gamma \in (\frac{1}{e} , \half]$. So $h$ is a binomial random variable with success probability $\gamma$ and mean $\mu = \gamma m$.
Therefore, letting $\delta = 1 - 1/(4\gamma)$ and using the Chernoff bound for the lower tail of $h$'s distribution, we get
\begin{equation}
	\Pr[ h \le m/4] \;=\; \Pr[ h \le (1-\delta) \mu] 
	\;\le\; e^{-\delta^2 \mu/2}	
	\;=\; e^{-\delta^2 \gamma m /2}
	\;=\; \alpha^{m} ,
	\label{eqn: reduction to coupons}
\end{equation}
where $\alpha = e^{-\delta^2 \gamma / 2} < 1$. 

Consider now the sequence consisting of elements of $X$ (with repetitions) that are isolated by $\barS$, in order in which
they are isolated.
Let $E_\ell$ be the event that the first $\min(\ell,h)$ elements of this sequence do not
contain our permutation $\pi$ as a subsequence.  Then $\Pr[E_{\ell} | h \ge \ell] = p_{\ell,k}$, so
\begin{align}
	\Pr[\barS\; \textrm{does not isolate $X$ in order $\pi$} ]\;&\le\;
					\Pr[h \le m/4] + \Pr[\,E_{m/4} \,|\, h \ge m/4 \,] \nonumber
					\\
		&\le\; \alpha^m + p_{m/4,k} \nonumber
					\\
		&\le\; \alpha^m + e^{-m/4k} ({m}/{2k})^k \nonumber
		\\
		&\le\; \beta^{m/k} ({m}/{k})^k  \,, \label{eq: prob not isolating X}
\end{align}
for $\beta = \max(\alpha, e^{-1/4})$. (We can assume that $m \ge 2k$.)

Let $m = ck^2\log N$, for some constant $c$ that will be specified later.
To upper bound the probability that $\barS$ is not a \kNpair{k}{N}-permutation selector, we apply the union bound. 
We have $\binom{N}{k}$ choices of $k$-element sets $X$, and each can be permuted in $k!$ ways.
So, using the bound~\eqref{eq: prob not isolating X}, we obtain

\begin{align}
\Pr[\barS\; \textrm{is not a \kNpair{k}{N}-permutation selector}] 
		\;& \le \;  \textstyle \binom{N}{k}\cdot k!  \cdot [ \beta^{m/k} ({m}/{k})^k ]
		\nonumber
	\\
	&\le \; \textstyle \binom{N}{k}\cdot k!  \cdot  \beta^{ck  \log N} \cdot (ck\log N)^{k}  
	\nonumber
	\\
	&\le \; N^{k}\cdot  N^k  \cdot \beta^{ck\log N} \cdot c^{k \log N} \cdot N^k \cdot N^k
	\nonumber
	\\
	&= \; N^{4k} \cdot  ( c \beta^c)^{k  \log N} \;<\; 1  \,, 
	\label{eqn: union bound}
\end{align}
where the last inequality holds if $c$ is large enough so that $c\beta^c < \frac{1}{16}$.
This proves that for this $c$ and $m = ck^2\log N$ there exists a \kNpair{k}{N}-permutation selector $\barS$ of length $m$.


\section{Gossiping in Ad-Hoc Radio Networks}
\label{sec: gossiping in ad-hoc radio networks}


In this section we prove Theorem~\ref{thm: gossiping result}. We start by giving a formal
description of the ad-hoc radio network model. The gossiping protocol we use is
essentially identical to the one in~\cite{Gasieniec_etal_det_gossip_04}, but we include 
its high-level description and sketch the analysis, for the sake of completeness.

\myparagraph{Ad-hoc radio network model.}
A radio network can be naturally modeled as a directed graph $G = (V, E)$ whose nodes represent processing elements
equipped with radio transmitters/receivers.
Each node has a unique label from the set $U = \braced{0,1, \ldots, N-1}$. The edges
represent the nodes' transmission ranges, that is $(u, v) \in E$ iff $v$ is in the range of the transmissions from $u$.
If $(u,v)\in E$ then $v$ is called an \textit{out-neighbor} of $u$ and  $u$ is an \textit{in-neighbor} of $v$. 

Initially, all nodes know only their own label and the upper bound $N$ on the number of labels. They do not
have any information about the network's topology. The time is discrete, divided into equal-length time steps.
If a node $u$ transmits a message, this message is sent to all its out-neighbors at the same time step. 
If $v$ is one of these out-neighbors, and $u$ is the only in-neighbor of $v$ transmitting at this step,
then $v$ will receive $u$'s message. But if some other in-neighbor of $v$ transmits at the same step, a \emph{collision} occurs.  
The model does not assume any collision detection capability, so $u$ will not receive any collision
notification and $v$ will not know that $u$ transmitted. There are no restrictions on message size or local computation.

The two most basic information dissemination primitives in this model are broadcasting and gossiping.
In broadcasting (or one-to-all communication) the goal is to deliver a message from a designated \emph{source} node to all
other nodes. In gossiping (or all-to-all communication) each node starts with its own piece of information that we call a \emph{rumor},
and the rumors from each node must be delivered to all other
nodes. For these problems to be well defined, $G$ must satisfy appropriate connectivity assumptions:
in broadcasting all nodes must be reachable from the source node, and in gossiping the network must be strongly connected.


\myparagraph{The gossiping protocol.} 
We describe the protocol under the assumption that $N=n$, and later we will show how to extend it to
the general case, when $N$ is polynomial in $n$. With this assumption,
a trivial gossiping protocol would broadcast the rumors from nodes labeled $0,1,\ldots,n-1$
one by one. Denoting by $B(n)$ the running time of a broadcasting protocol, this would take time $O(nB(n))$.
To speed this up, the gossiping protocols in~\cite{Chrobak_etal_fast_02,Liu_Prabhakaran_randomized_02,Gasieniec_etal_det_gossip_04}
work by grouping rumors in some nodes so that these nodes can broadcast the collected rumors in a single message.

To reduce the number of such broadcasts, the idea is to broadcast from nodes that have many rumors.
We call a rumor \emph{active} if it has not been broadcast yet, and a node is \emph{active} if its rumor is active
and \emph{dormant} otherwise.
We use procedure $\Disperse(\mu)$, which repeatedly chooses a node with at least $\mu$ active
rumors and then broadcasts from this node. Choosing such a node can be accomplished with
broadcasting and binary search~\cite{Chrobak_etal_fast_02}, and the number of chosen nodes will
be $O(n/\mu)$, so the total running time of $\Disperse(\mu)$ is $O((n/\mu)B(n)\log n)$.

One clever observation in~\cite{Gasieniec_etal_det_gossip_04} is that it is sufficient to
give a protocol for a task called \emph{quasi-gossiping}, where each node needs to either become
dormant itself or have its rumor delivered to a dormant node. This is because a single repetition of
the transmission sequence from the quasi-gossiping protocol will in fact complete full gossiping.
The pseudo-code for the quasi-gossiping protocol
is given in Algorithm~\ref{alg: quasigossip}. It uses a parameter $\kappa$ whose value will be determined later.


\algdef{SE}[REPEATN]{RepeatTimes}{EndRepeatTimes}[1]{\algorithmicrepeat\ #1 \textbf{times}}{}
\algnotext{EndRepeatTimes}

\begin{algorithm}[t]
\caption{\QuasiGossip} \label{alg: quasigossip}
\begin{algorithmic}[1]
	\For{$v=0,1,...,n-1$}
   	 	\State {transmit from node $v$}
	\EndFor
    \State{$\Disperse(\kappa)$}
    \RepeatTimes{$\log \kappa + 1$}
    	\State {the active nodes transmit according to a \kNpair{\kappa}{n}-permutation selector}
   	 	\State {$\Disperse(\kappa/2)$}
	\EndRepeatTimes
\end{algorithmic}
\end{algorithm}


The correctness of Protocol~$\QuasiGossip$ is justified by focussing on \emph{active paths}, which are
paths consisting only of active nodes. For any active path, define its \emph{active in-neighborhood} to be the
set of active in-neighbors of the nodes on this path.
Let $\ell$ be the largest number such that each active path with $\ell$ nodes has
its active in-neighborhood size smaller than $\kappa$. After the execution of $\Disperse(\kappa)$, each
node has fewer than $\kappa$ active in-neighbors, so at this point we have $\ell\ge 1$.
Each iteration of the \textbf{repeat} loop at least doubles the value of $\ell$.
Since all nodes on an active path (except possibly the last) belong to the active in-neighborhood,
the value of $\ell$ cannot exceed $\kappa$, so after $\log \kappa$ iterations
\emph{all} active paths will have active in-neighborhoods bounded by $\kappa$, and then
the last iteration will complete the quasi-gossiping task.

In line~5 we use the \kNpair{\kappa}{n}-permutation selector from Theorem~\ref{thm: main result},
so the total cost of these selectors will be  $O(\kappa^2\log^2 n)$. 
The cost of all calls to $\Disperse()$ is $O((n/\kappa+\log n)B(n)\log n)$.
Thus the overall running time of Protocol~$\QuasiGossip$ is asymptotically bounded by:
\begin{equation*}
(n/\kappa + \log n)B(n)\log n + \kappa^2\log^2 n
\end{equation*}
Letting $\kappa = (nB(n)/\log n)^{1/3}$, and 
using the bound $B(n) = O(n \log n \loglog n)$ from~\cite{marco_distributed_broadcast_2010,Czumaj_Davies_faster_16}
on the complexity of broadcasting, we obtain a gossiping protocol 
with running time $O(n^{4/3}\log^2 n (\loglog n)^{2/3})$. 


\smallskip

To extend Protocol~$\QuasiGossip$ to the case when $N$ is polynomial in $n$, we first replace $n$ by $N$ in all
invocations of selectors. After this, the only problematic part of Protocol~$\QuasiGossip$ is the \textbf{for} loop in Lines~1-2 that
reduces the active in-neighborhoods of the individual nodes. Instead of this loop, 
the protocol in~\cite{Gasieniec_etal_det_gossip_04} uses \kqNtriple{s}{s/4}{N}-selectors for geometrically decreasing values of $s$,
combined with operations $\Disperse(s/4)$, to gradually
reduce these in-neighborhoods.
(This is similar to the method in~\cite{komlos_greenberg_non-adaptive_mac_1985}.)
The running time of this process
amortizes to $O((n/\kappa)B(n)\log n)$ --- same time as for the case of small labels.
So the overall running time also remains the same, completing the proof of Theorem~\ref{thm: gossiping result}.


\bibliographystyle{plain}
\bibliography{00_permutation_selectors_paper.bib}


\vfill\newpage

\appendix

\section{\kqNtriple{k}{q}{N}-Permutation Selectors}
\label{sec: nkq permutation selectors}



In Section ~\ref{sec: permutation selectors} we defined $(N,k)$-permutation selectors, a family of sets 
that isolates all elements of any $k$-permutation $\pi$ of the label set 
$U= \braced{0, 1, \ldots, N-1}$, meaning that all elements in $\pi$ are isolated in their listed order in $\pi$.

We now generalize this concept analogously to the way \kqNtriple{k}{q}{N}-selectors generalize standard selectors
by requiring that some $q$ elements of each $k$-permutation $\pi$ are isolated in the order of $\pi$.
Formally, let $\barS = S_0, S_1, \ldots, S_{m-1}$  be a sequence of subsets of $U$.
$\barS$ is called a \emph{\kqNtriple{k}{q}{N}-permutation selector} if the following property holds:
\begin{description}
\item{(PS')} For each $X\subseteq U$ with $|X|= k$, and for each permutation $\pi = x_1,x_2,...,x_k$ of $X$,
there are increasing sequences of indices $0 \leq i_1< i_2 < ... < i_{q} \leq m-1$ and $d_1 < d_2 < ... < d_q$
such that, for all $l = 1,2,...,q$, set $S_{i_l}$ isolates $x_{d_l}$ from $X$.
\end{description}
The remainder of this section provides a probabilistic construction of a \kqNtriple{k}{q}{N}-permutation selector, proving the following theorem.


\begin{theorem}\label{thm: Nkq permutation selector construction}
For any $k \in\braced{2,3,..., N}$, there exists a \kqNtriple{k}{q}{N}-permutation selector $\barS = S_0, S_1, \cdots, S_{m-1}$ of size $m = O(kq\log N)$.
\end{theorem}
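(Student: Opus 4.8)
The plan is to mimic the probabilistic argument used for Theorem~\ref{thm: main result} almost verbatim, adjusting only the combinatorial estimate on the auxiliary coupon-collection event to account for the relaxation from isolating all $k$ elements to isolating merely $q$ of them in order. I would again sample each $S_i$ by letting every label join independently with probability $1/k$, so that for a fixed $k$-set $X$ each $S_i$ isolates some element of $X$ with probability $\gamma=(1-1/k)^{k-1}\in(1/e,\tfrac12]$. The Chernoff estimate bounding $\Pr[h\le m/4]$ by $\alpha^m$, where $h$ counts the isolating sets, carries over unchanged since it depends only on $\gamma$ and $m$, not on $q$.

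The one genuinely new ingredient is the coupon estimate. Fixing $X=\braced{0,1,\dots,k-1}$ and $\pi=0,1,\dots,k-1$, I would define, for a uniformly random sequence $\barr=r_0,\dots,r_{\ell-1}$ over $\braced{0,\dots,k-1}$, the probability $p_{\ell,k,q}$ that $\barr$ does \emph{not} contain any increasing-order subsequence of $q$ of the symbols $0,1,\dots,k-1$ — that is, $\barr$ fails to contain, in order, any length-$q$ subpattern $x_{d_1},\dots,x_{d_q}$ with $d_1<\dots<d_q$ of $\pi$. The key observation is that this failure is equivalent to the length of the longest subsequence of $\barr$ that is an order-consistent subpattern of $\pi$ being at most $q-1$. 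Counting sequences by their lexicographically-first appearance of such a maximal order-consistent subpattern of length $j$ (for $j=0,1,\dots,q-1$), I expect a bound of the shape
\begin{equation*}
p_{\ell,k,q} \;\le\; \frac{1}{k^\ell}\sum_{j=0}^{q-1}\binom{\ell}{j}(k-1)^{\ell-j}
\;=\;(1-1/k)^{\ell}\sum_{j=0}^{q-1}\binom{\ell}{j}(k-1)^{-j}
\;<\; e^{-\ell/k}(2\ell/k)^{q},
\end{equation*}
the final step using $\binom{\ell}{j}(k-1)^{-j}\le(\ell/(k-1))^j\le(2\ell/k)^j$ and summing the geometric-like series, now truncated at $q-1$ rather than $k-1$. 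This replaces the exponent $k$ in the estimate~\eqref{eqn: estimate on pellk} by $q$, which is exactly the saving that propagates through to the final size bound.

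With this in hand, the remainder follows the template of~\eqref{eq: prob not isolating X} and~\eqref{eqn: union bound}. Combining the Chernoff tail with $p_{m/4,k,q}\le e^{-m/4k}(m/2k)^{q}$ gives, for a fixed $X$ and $\pi$, a failure probability bounded by $\beta^{m/k}(m/k)^{q}$ with $\beta=\max(\alpha,e^{-1/4})<1$. Taking $m=ckq\log N$ and applying the union bound over the $\binom{N}{k}k!\le N^{2k}$ choices of $X$ and $\pi$, the failure probability is at most $N^{2k}\cdot\beta^{cq\log N}(cq\log N)^{q}$; bounding $(cq\log N)^q\le N^{O(q)}$ and choosing $c$ large enough that the $\beta^{cq\log N}$ factor overwhelms the polynomial-in-$N$ prefactors makes this strictly less than $1$, proving existence.

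The main obstacle is the combinatorial counting behind $p_{\ell,k,q}$. In the $q=k$ case the event "does not contain $\pi$" has a clean description via a single lexicographically-first appearance, because $\pi$ uses each symbol once and in a fixed order; for general $q$ I must count sequences whose longest order-consistent subpattern of $\pi$ has length at most $q-1$, and I must verify that charging each such sequence to the lexicographically-first appearance of a \emph{maximal-length} such subpattern (filling the remaining $\ell-j$ slots with the $(k-1)$ symbols that cannot extend it) yields a valid upper bound rather than an equality or an undercount. The subtlety is ensuring the $(k-1)^{\ell-j}$ "forbidden-extension" count is an honest over-count of the remaining positions; I would state this as a lemma and prove it by the same first-appearance injection used in the $q=k$ case, checking that every failing sequence is counted at least once. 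Once that counting lemma is pinned down, the analytic and union-bound steps are routine transcriptions of the Theorem~\ref{thm: main result} proof.
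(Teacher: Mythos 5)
Your high-level template (random sets of density $1/k$, the Chernoff step, reduction to a coupon problem, union bound over all $k$-permutations) is the same as the paper's, but your coupon estimate has the wrong exponential decay rate, and this breaks the union bound whenever $q\ll k$. Your counting --- charging a failing sequence to the lexicographically-first appearance of a maximal increasing pattern and filling the other $\ell-j$ positions with $k-1$ values each --- amounts to bounding $\Pr[\mathrm{LIS}<q]$ by the probability of avoiding the \emph{single} fixed pattern $0,1,\dots,q-1$. (That containment is in fact how to make your counting lemma rigorous, so the obstacle you flagged is not the real problem: your inequality is true.) But the single-pattern-avoidance event genuinely has probability at least $(1-1/k)^\ell$ --- for instance, the symbol $0$ may never occur at all --- so nothing along these lines can decay faster than $e^{-\Theta(\ell/k)}$. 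Now run your union bound: with $m=ckq\log N$ the per-permutation failure probability is $\beta^{m/k}(m/k)^q=\beta^{cq\log N}(cq\log N)^q\approx N^{-\Theta(cq)}$, which must be multiplied by $\binom{N}{k}\,k!\approx N^{2k}$. When $q=o(k)$, no \emph{constant} $c$ makes $N^{2k}\cdot N^{-\Theta(cq)}<1$; you would need $c=\Omega(k/q)$, i.e.\ $m=\Omega(k^2\log N)$, which recovers Theorem~\ref{thm: main result} but not the claimed $O(kq\log N)$. Your phrase ``choosing $c$ large enough that the $\beta^{cq\log N}$ factor overwhelms the polynomial-in-$N$ prefactors'' hides exactly this: the prefactor $N^{2k}$ has degree $2k$, not $O(q)$.

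What is missing is the paper's one genuinely new idea: replace ``increasing subsequence'' by a stricter structured event whose avoidance costs $1-1/q$ per position instead of $1-1/k$. The paper partitions $\braced{0,1,\dots,k-1}$ into $q$ consecutive blocks $B_0,\dots,B_{q-1}$ of size $k/q$ and looks only for \emph{jump subsequences}: one element of $B_0$, then one of $B_1$, and so on. A length-$q$ jump subsequence is in particular increasing, so the probability that $\barr$ has no increasing subsequence of length $q$ is at most the probability that it has no length-$q$ jump subsequence; and in the lexicographic-first counting for that event, every remaining position must avoid an entire block of $k/q$ values, giving
\begin{equation*}
\frac{1}{k^\ell}\sum_{j=0}^{q-1}\binom{\ell}{j}\Bigl(\frac{k}{q}\Bigr)^{j}\Bigl(k-\frac{k}{q}\Bigr)^{\ell-j}
\;=\;(1-1/q)^\ell\sum_{j=0}^{q-1}\binom{\ell}{j}(q-1)^{-j}\;\le\; e^{-\ell/q}\,(2\ell/q)^q .
\end{equation*}
The factor $e^{-\ell/q}$ is what makes the union bound close: it yields a per-permutation bound of the form $\delta^{m/q}(m/q)^q$ with $\delta^{m/q}=\delta^{ck\log N}=N^{-\Theta(ck)}$, which beats $N^{2k}$ for a fixed constant $c$, uniformly in $q$.
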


\begin{proof}
As in the proof of Theorem~\ref{thm: main result}, we use a probabilistic argument.
The construction is the same: for each $j = 0,1,...,m-1$, we let $S_j$ be a random subset of $U$ obtained by
each element $x\in U$ adding itself to $S_j$, independently, with probability $\frac{1}{k}$. 
We then need to prove that if $c$ is a sufficiently large constant and $m = c \cdot qk\log N$, then
\begin{equation}
\Pr[\barS\; \textrm{is not a \kqNtriple{k}{q}{N}-permutation selector}] \;<\; 1.
\label{eqn: permutation Nkq-selector probability}
\end{equation}
The proof's high-level strategy is the same as in the proof of Theorem~\ref{thm: main result}.
We use the Chernoff bound to reduce the problem to a version of the coupon collection problem. 
We consider the following variant of the coupon collection problem:
For a random sequence $\barr = r_0,r_1,...,r_{\ell-1}$ of coupons from $\braced{0,1,...,k-1}$, where $\ell \ge k \ge 2$,
let $p'_{\ell,k,q}$ be the probability of the 
event ``$\barr$ does not contain an increasing subsequence of length $q$''. 
To show~\eqref{eqn: permutation Nkq-selector probability}, it is sufficient to prove that
\begin{equation}
p'_{\ell,k,q} \;\le\;  \gamma^{-\ell/q} ({2\ell}/{q})^q\,,
\label{eqn: Nkq-subsequence probability}
\end{equation}
for some constant $\gamma\in (0,1)$. Indeed, this is analogous to~\eqref{eqn: estimate on pellk}.
We can then use the Chernoff-based bound~\eqref{eqn: reduction to coupons}, to
obtain a bound of $\delta^{m/q}(m/q)^q$, for some $0 < \delta < 1$, 
on the probability that $\barS$ does not isolate $q$ elements of $\pi$ in order,
analogously to~\eqref{eq: prob not isolating X}.
For the union-bound estimate, we then take $m = ckq\log N$, with large enough $c$.
Since then $\delta^{m/q} = \delta^{ck\log N}$, 
the derivation for the union bound~\eqref{eqn: union bound} will be essentially the
same, with $\delta$ instead of $\beta$.

\smallskip

It remains to estimate $p'_{\ell,k,q}$.  We reason as follows. 
First, to avoid clutter in the calculations below, we will assume that $q$ is a divisor of $k$. 
(If it is not, the values of $k/q$ in the calculations below need to be rounded up or down.
This does not affect our asymptotic estimate.)

We now consider a special type of increasing sub-sequences that we refer to as \emph{$q$-jump sub-sequences}.
Partition the set of coupons $\braced{0,1,...,k-1}$ into $q$ equal size blocks $B_0, B_1,...,B_{q-1}$,
each having $k/q$ consecutive coupons. That is,
$B_h = \braced{h\frac{k}{q}, h\frac{k}{q}+1, ..., (h+1) \frac{k}{q}-1}$, for $h = 0,1,...,q-1$.
For $j\le q$,  a jump sub-sequence is a sequence of $j$ coupons $a_0,a_1,...,a_{j-1}$
such that $a_h \in B_h$ for each $h = 0,...,j-1$.

Define now $p''_{\ell,k,q}$ to be the probability of the event ``$\barr$ does not contain a jump sub-sequence of length $q$''.
Since $p''_{\ell,k,q} \ge p'_{\ell,k,q}$, it is sufficient to prove that the
same inequality~\eqref{eqn: Nkq-subsequence probability} holds for $p''_{\ell,k,q}$ instead of $p'_{\ell,k,q}$.
We do this by refining the argument in the proof of Theorem~\ref{thm: main result}.
For a given $j\in\braced{0,1,...,q-1}$, if $\barr$ contains a length-$j$ jump sub-sequence then
associate with $\barr$ the unique \emph{lexicographically-first appearance} of a length-$j$ jump sub-sequence.
The number of $\barr$'s that contain a length-$j$ jump sub-sequence 
but not a length-$(j+1)$ jump sub-sequence can be
obtained by choosing the lexicographically first length-$j$ jump sub-sequence in
$\binom{\ell}{j}\cdot (k/q)^j$ ways and multiplying it by
the number of ways of filling the remaining $\ell-j$ positions, which is $(k-k/q)^{\ell-j}$.
We thus have
\begin{align*}
	p'_{\ell,k,q} \;\le\; p''_{\ell,k,q} \;&=\; \frac{1}{k^\ell} \sum_{j=0}^{q-1} \binom{\ell}{j}\cdot (k/q)^j \cdot (k - k/q)^{\ell-j}
	\\ 
	\;&=\; (1-1/q)^\ell \sum_{j=0}^{q-1}  \binom{\ell}{j} (q-1)^{-j}	
	\\
		\;&\le\; (1-1/q)^\ell \sum_{j=0}^{q-1} (2\ell /q)^j	
	\\
		\;&\le\; e^{-\ell/q} (2\ell /q)^q.
\end{align*}
This proves~\eqref{eqn: Nkq-subsequence probability} (with $\gamma = 1/e$), completing the proof.
\end{proof}

For $q=k$, the bound in Theorem~\ref{thm: Nkq permutation selector construction} matches
the bound from Theorem~\ref{thm: main result} and the best bound for strong \kNpair{k}{N}-selectors.
For $q=1$, it also matches the $O(k\log N)$ bound for weak \kNpair{k}{N}-selectors.
We are not sure about the optimum size of \kqNtriple{k}{q}{N}-permutation selectors for intermediate values of $q$. 
The case when $q = \sqrt{k}$ is particularly interesting. From the bound in~\cite{DeBonis_etal_selectors},
\kqNtriple{k}{\sqrt{k}}{N}-selectors have size $O(k\log N)$. 
Can some probabilistic construction produce a \kqNtriple{k}{\sqrt{k}}{N}-permutation selector
of size $m = \tildeO(k)$? For such $m$, a random sequence of coupons from $\braced{0,1,...,k-1}$
can be thought of as a ``noisy'' permutation. We would need to show that the
probability that this random sequence does not contain an increasing sub-sequence of length $\sqrt{k}$ is
exponentially small. This question is closely related to Ulam's Problem
about the distribution of the longest increasing subsequence (LIS) in a random permutation.
Ulam's Problem has been extensively studied, and it is known that for permutations of $0,1,...,k-1$
the expected length of LIS is $2\sqrt{k}+ O(k^{1/6})$. 
(See~\cite{romik_lis_book_2014}, for example.) To our knowledge, however, the published concentration bounds are 
not sufficient for refining the probabilistic construction in the
proof of Theorem~\ref{thm: Nkq permutation selector construction} to yield a better bound.



\end{document}